\DeclareMathAlphabet{\mathbbold}{U}{bbold}{m}{n}
\newtheorem{theorem}{Theorem}
\newtheorem{claim}[theorem]{Claim}
\newcommand{\PCSF}{\ensuremath{\bf{PCSF}}\xspace}
\newcommand{\PCST}{\ensuremath{\bf{PCST}}\xspace}
\newcommand{\st}{\ensuremath{\mbox{s.t.}}}
\newcommand{\opt}{\ensuremath{\texttt{opt}}}
\newcommand{\0}{\mathbbold{0}}
\newcommand{\1}{\mathbbold{1}}
\newcommand{\I}{\ensuremath{\mathcal{I}}}
\newcommand{\A}{\ensuremath{\mathcal{A}}}
\renewcommand{\Pr}{\mathbb{P}}
\newcommand{\pr}[1]{\Pr\left[#1\right]}
\newcommand{\tp}{{\scriptscriptstyle\top}}
\newcommand{\rt}{r_0}
\newcommand{\pen}[1]{\ensuremath{\pi(#1)}}
\begin{document}

\title{On the Integrality Gap of the Prize-Collecting Steiner Forest LP} 
\author[1]{Jochen K\"onemann}
\author[2]{Neil Olver}
\author[1]{Kanstantsin Pashkovich}
\author[3]{R. Ravi}
\author[1]{Chaitanya Swamy}
\author[4]{Jens Vygen}

\affil[1]{Department of Combinatorics and Optimization, University of Waterloo, 
Canada. \texttt{\{jochen,kpashkovich,cswamy\}@uwaterloo.ca}}
\affil[2]{Department of Econometrics and Operations Research, Vrije Universiteit
  Amsterdam, and CWI, Amsterdam, The Netherlands. \texttt{n.olver@vu.nl}}
\affil[3]{Tepper School of Business, Carnegie Mellon University, USA. \texttt{ravi@andrew.cmu.edu}}
\affil[4]{Research Inst. for Discrete Math., Univ. Bonn, Germany. \texttt{vygen@or.uni-bonn.de}}

\date{}

\maketitle

\begin{abstract}
  In the {\em prize-collecting Steiner forest} (\PCSF) problem, we are given an undirected
  graph $G=(V,E)$, edge costs $\{c_e\geq 0\}_{e\in E}$, terminal pairs
  $\{(s_i,t_i)\}_{i=1}^k$, and penalties $\{\pi_i\}_{i=1}^k$ for each terminal pair; the
  goal is to find a forest $F$ to minimize
  $c(F)+\sum_{i: (s_i,t_i)\text{ not connected in }F}\pi_i$. The {\em Steiner forest}
  problem can be viewed as the special case where $\pi_i=\infty$ for all $i$.
  It was widely believed that the integrality gap of the natural (and well-studied)
  linear-programming (LP) relaxation for \PCSF \eqref{lp} is at most 2.
  We dispel this belief by showing that the integrality gap of this LP is at least $9/4$. This holds even for planar graphs. We
  also show that using this LP, one cannot devise a 
  Lagrangian-multiplier-preserving (LMP) algorithm with approximation guarantee better than $4$.  
  Our results thus show a separation between the integrality gaps of the LP-relaxations
  for prize-collecting and non-prize-collecting (i.e., standard) Steiner forest, as well
  as the approximation ratios achievable relative to the optimal LP solution by LMP- and
  non-LMP-  approximation algorithms for \PCSF.   
  For the special case of {\em prize-collecting Steiner tree} (\PCST), we prove
  that the natural LP relaxation admits basic feasible solutions with all coordinates
  of value at most $1/3$ and all edge variables positive. Thus, we rule out the
  possibility of approximating \PCST with guarantee better than $3$ using a direct
  iterative rounding method. 
\end{abstract}

\section{Introduction and Background}

In an instance of the well-studied Steiner tree problem one is given
an undirected graph $G=(V,E)$, a non-negative cost $c_e$ for each edge
$e\in E$, and a set of \emph{terminals} $R \subseteq V$.  The goal is
to find a minimum-cost tree in $G$ spanning $R$. In the more general
{\em Steiner forest} problem, terminals are replaced by {\em terminal
  pairs} $(s_1,t_1), \ldots, (s_k,t_k)$ and the goal now becomes to
compute a minimum-cost forest that connects $s_i$ to $t_i$ for all
$i$. Both of the above problems are well-known to be NP- and
APX-hard~\cite{CC08,Ka72}. The best-known approximation algorithm for
the Steiner tree problem is due to Byrka et al.~\cite{BG+13} (see also
\cite{GO+12}) and achieves an approximation ratio of
$\ln 4 +\epsilon$, for any $\epsilon>0$; the Steiner forest problem admits a
$(2-1/k)$-approximation algorithm~\cite{AKR,GW}.

Our work focuses on the {\em prize-collecting} versions of the above
problems. In the prize-collecting Steiner tree problem (\PCST) we are
given a Steiner-tree instance and a non-negative penalty $\pi_v$ for
each terminal $v \in R$. The goal is to find a tree $T$ that minimizes
$c(T)+\pen{T}$, where $c(T)$ denotes the total cost of all edges in
$T$, and $\pen{T}$ denotes the total penalty of all terminals not spanned by 
$T$. In the prize-collecting Steiner forest problem (\PCSF), we are
given a Steiner-forest instance and a non-negative penalty $\pi_i$ for
each terminal pair $(s_i,t_i)$, and the goal is to find forest $F$
that minimizes $c(F)+\pen{F}$ where, similar to before, $c(F)$ is the
total cost of forest $F$, and $\pen{F}$ denotes the total penalty of
terminal pairs that are not connected by $F$. 
We can view {\PCST} as a special case of {\PCSF} by guessing a node $r$ in the optimal 
tree, 
and then modeling each vertex in $v\in R\setminus\{r\}$ by the terminal pair $(v,r)$. 

The natural integer program (IP) for \PCSF (see e.g.~\cite{BG+93}) uses a
binary variable $x_e$ for every edge $e \in E$ whose value is $1$ if
$e$ is part of the forest corresponding to $x$. The IP also has a
variable $z_i$ for each pair $(s_i,t_i)$ whose value is $1$ if $s_i$
and $t_i$ are {\em not} connected by the forest corresponding to $x$. 
We use $i \odot S$ for the predicate that is {\em
  true} if $S\subseteq V$ contains exactly one of $s_i$ and $t_i$, and
{\em false} otherwise. 
We use $\delta(S)$ to denote the set of edges with exactly one endpoint
in $S$.
In any integer solution to the LP relaxation below, 
the constraints insist that every cut separating pair 
$(s_i,t_i)$ must be crossed by the forest unless we set $z_i$ to 1 and 
pay the penalty for not connecting the terminals.
\begin{alignat}{2}
  \min & \quad & c^{\tp}x + \pi^{\tp}z & \label{lp}\tag{PCSF-LP} \\
  \st & \quad & x(\delta(S)) + z_i & \geq 1 \qquad \forall S \subseteq V,\ i \odot
          S \notag \\
  && x, z & \geq \0. \notag
\end{alignat}

Bienstock et al.~\cite{BG+93} first presented a $3$-approximation for
\PCST\ via a natural {\em threshold rounding} technique applied to this LP relaxation.
This idea also works for \PCSF, and proceeds as follows.
First, we compute a solution $(x,z)$ to the above LP. Let $R'$ be the set
of terminal pairs $(s_i,t_i)$ with $z_i < 1/3$. Note that
$\frac32 \cdot x$ is a feasible solution for the standard Steiner-forest
cut-based LP (obtained from \eqref{lp} by deleting the $z$
variables) on the instance restricted to $R'$. Thus, applying an LP-based $2$-approximation for Steiner
forest~\cite{AKR,GW} to terminal pairs $R'$ yields a forest $F'$ of cost at most
$2 \cdot \frac32 c^{\tp}x = 3c^{\tp}x$. The total penalty of the disconnected
pairs is at most $3\cdot \pi^{\tp}z$. Hence, $c(F') + \pen{F'}$ is bounded
by $3(c^{\tp}x+\pi^{\tp}z)$, and the algorithm is a $3$-approximation.
Goemans showed that by choosing a random threshold (instead of
the value $1/3$) from a suitable distribution, one can obtain an improved performance
guarantee of $1/(1-e^{-1/2})\approx 2.5415$ (see page 136 of~\cite{SWbook}, which
attributes the corresponding randomized algorithm for PCST in Section 5.7 of~\cite{SWbook}
to Goemans).

Goemans and Williamson~\cite{GW} later presented a primal-dual
$2$-approximation for \PCST{} based on the Steiner tree special case
(PCST-LP) of \eqref{lp}. In fact, the
algorithm gives even a slightly better guarantee; it produces a tree
$T$ such that 
\[ c(T) + 2\pen{T} \leq 2\cdot\opt_{\textrm{PCST-LP}}, \]
where $\opt_{\mathrm{PCST-LP}}$ is the optimum value of (PCST-LP). 
Algorithms for prize-collecting problems that achieve a performance
guarantee of the form 
\[ c(F) + \beta\cdot \pen{F} \leq \beta \cdot \opt \]
are called {\em $\beta$-Lagrangian-multiplier preserving} ($\beta$-LMP) algorithms. Such
algorithms are useful, for instance, for obtaining approximation algorithms
for the {\em partial} covering version of the problem, which in the case of Steiner tree
and Steiner forest translates to connecting at least a desired number of terminals
(e.g., see~\cite{BRV,jv01,crw04,Garg05,kps06}). Archer et 
al.~\cite{AB+11} later used the strengthened guarantee of 
Goemans and Williamson's LMP algorithm for \PCST\ to obtain a
$1.9672$-approximation algorithm for the problem.

The best known approximation guarantee for \PCSF is $2.5415$ obtained, as noted above, via
Goemans' random-threshold idea applied to the threshold-rounding algorithm of Bienstock
et al. This also shows that the integrality gap of \eqref{lp} is at most $2.5415$.
The only known lower bound prior to this work was 2.

\vspace{-1ex}
\paragraph{Our contributions.}
We demonstrate some limitations of \eqref{lp} for designing approximation algorithms for
\PCSF and its special case, \PCST, and in doing so dispel some widely-held beliefs about
\eqref{lp} and its specialization to \PCST.

The integrality gap of \eqref{lp} has been widely believed to be 2 since the work of 
Hajiaghayi and Jain~\cite{HJ}, who devised a primal-dual $3$-approximation algorithm for
\PCSF and pose the design of a primal-dual $2$-approximation based on \eqref{lp} as an
open problem.   
However, as we show here, this belief is incorrect. Our main result is as follows.

\begin{theorem}
\label{thm:pcsf-int-gap}
The integrality gap of \eqref{lp} is at least $9/4$, even for planar
instances of \PCSF. Furthermore, any
$\beta$-LMP approximation algorithm for the problem via \eqref{lp} must
have $\beta \geq 4$.
\end{theorem}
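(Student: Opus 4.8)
The plan is to prove both statements by exhibiting an explicit family of planar \PCSF instances $\{I_n\}_n$ and, for each, (i) a feasible solution of \eqref{lp} whose value upper‑bounds $\opt_{\text{LP}}(I_n)$, and (ii) a lower bound on $\opt(I_n)$ (and, for the LMP part, on a Lagrangian variant), so that the relevant ratio tends to $9/4$ (resp.\ $4$) as $n\to\infty$. I would build $I_n$ from a small planar gadget block -- a cycle/ladder‑type block with carefully tuned edge costs and terminal‑pair penalties -- composed $n$ times in a way that keeps the graph planar. For the LP upper bound I expect a highly symmetric fractional solution to suffice: put the same value $x_e$ on all ``parallel'' edges of a block and the same $z_i$ on all equivalent terminal pairs, so that the objective value can be read off directly. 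Feasibility is checked by verifying $x(\delta(S)) + z_i \geq 1$ for all separating cuts; here planarity is a genuine aid, since the relevant minimum cuts correspond to short closed walks in the planar dual and can be enumerated, so one only needs to certify the inequality on a clean, small collection of cuts.

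The heart of the argument is the lower bound on $\opt(I_n)$. Given any forest $F$, let $A=A(F)$ be the set of terminal pairs it connects; then $c(F)+\pen{F}\ \geq\ \big(\text{min cost of a forest connecting all pairs in }A\big)+\sum_{i\notin A}\pi_i$. I would first reduce to the case where $F$ is itself a minimum‑cost forest connecting exactly $A$, and then establish a congestion/charging lemma of the form: connecting a set $A$ of pairs in the gadgeted graph costs at least some increasing function $g$ of how $A$ meets each block, because the pairs compete for the same cheap edges within a block. Optimizing the resulting trade‑off $\min_A\big(g(A)+\sum_{i\notin A}\pi_i\big)$ against the LP value then yields the factor $9/4$. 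The main obstacle is precisely this step: one must rule out \emph{all} forests, not just the symmetric/natural ones, since a clever $F$ could connect an awkward subset $A$ and share edges across blocks in an unexpected way. I plan to control this using the symmetry of the construction -- arguing that an optimal $A$ may be taken ``balanced'' across the $n$ blocks, which reduces the estimate to a single‑block (essentially finite) computation -- together with an uncrossing/planar‑duality argument bounding how $F$ can route inside a block.

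For the LMP statement it suffices, to rule out $\beta$‑LMP algorithms via \eqref{lp} for a given $\beta<4$, to produce a single instance in which \emph{no} forest $F$ satisfies $c(F)+\beta\,\pen{F}\leq \beta\cdot\opt_{\text{LP}}$; equivalently, $\min_F\big(c(F)+\beta\,\pen{F}\big) > \beta\cdot\opt_{\text{LP}}$. I would use the same family with the penalties rescaled by a parameter $\lambda$, chosen so the instance is ``critically balanced'', and show that every forest $F$ with $\pen{F}<\opt_{\text{LP}}$ obeys $c(F)+4\,\pen{F}\geq(4-o(1))\,\opt_{\text{LP}}$: forests that connect many pairs are expensive in $c$ by the same congestion lemma (in particular the all‑pairs forest alone already costs $\geq(4-o(1))\,\opt_{\text{LP}}$, even though it is \emph{not} the optimal integer solution), while forests that drop pairs are punished through the $4\,\pen{F}$ term, and dropping a single pair changes the connection cost only in a controlled way. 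The reason the LMP bound needs a different penalty regime than the $9/4$ bound is that the worst case for an LMP algorithm is driven by the fully‑connecting solution, whereas the worst case for the integrality gap is driven by a partial solution; tuning $\lambda$ interpolates between the two. As above, the delicate point is the uniform control over all forests, which I would again obtain from the symmetry‑reduction plus uncrossing, now carried out with the extra parameter $\lambda$ in hand.
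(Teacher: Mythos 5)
Your plan is a genuinely different route from the paper's, and the difference matters. The paper never exhibits a cost/penalty vector at all. Instead it fixes a single highly symmetric fractional point $(x,z)$ (on a recursively nested graph $G=H^{(k)}$ built from a $3$-regular, $3$-edge-connected planar $P$ with subdivided edges), and then invokes the Carr--Vempala characterization: integrality gap $\le\alpha$ implies $\alpha(x,z)$ dominates a convex combination of forests, i.e.\ a distribution with $\pr{e\in F}\le\alpha/3$ and $\pr{u\sim_F v}\ge 1-\alpha z_{uv}$. The contradiction then comes from a probabilistic argument showing that, along a chain of nested copies of $H$, the connection probability to the root decays geometrically at rate $2/3$, forcing $\alpha\ge 9/4$. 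This neatly sidesteps exactly the step you identify as the ``heart'' of your argument — a uniform lower bound on $\opt(I_n)$ over all forests — because the distributional dominance assumption already encodes what would otherwise require explicit cost tuning and LP duality.

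Concretely, there are two gaps in the proposal. First, the congestion/charging lemma and the symmetry-plus-uncrossing reduction are not formulated, and they are precisely where all the difficulty lives: you would need to show $\min_A\bigl(g(A)+\sum_{i\notin A}\pi_i\bigr)\ge(9/4)\cdot\opt_{\text{LP}}$ uniformly over all subsets $A$ and all routings, which is not a finite single-block computation unless the construction forces locality, and that forcing is exactly what needs proving. Second, and more structurally, your instances are built by composing a fixed block $n$ times ``in a row,'' i.e.\ a flat family. The $9/4$ bound in the paper does not come from any single-scale congestion effect; it comes from the recursion $p_{i+1}\lessapprox \alpha/3 - \tfrac23(1-p_i)$ over $k$ nested levels, whose fixed point $p=\alpha-2$ compared against the requirement $p\ge 1-\alpha/3$ yields $\alpha\ge 9/4$. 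A flat gadget composition has no mechanism to generate this multi-level amplification; it is very likely to plateau at a smaller gap (the obvious single-level analysis of such constructions tends to give $2$). Unless your gadgets are also nested hierarchically, the plan as stated would not reach $9/4$. The same concern applies to the LMP part: the paper again uses the same recursive graph with an $l$-regular base and gets $\beta\ge 4-4/l$ from the analogous geometric decay; the role of ``tuning $\lambda$'' in your sketch is subsumed there by letting $l\to\infty$, and again no explicit cost vector is ever produced.

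If you want to salvage the direct route, the minimum fix is (i) replace the flat composition by a depth-$k$ recursive nesting so that a geometric decay argument is available, and (ii) replace the congestion lemma by an argument that works against an arbitrary convex combination of forests, at which point you have essentially rederived the Carr--Vempala reduction and the paper's proof. The Carr--Vempala step is what lets one avoid constructing costs and penalties explicitly, and it appears to be doing real work here rather than being a stylistic choice.
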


When restricted to the
non-prize-collecting Steiner forest problem, by setting
$\pi_i = \infty$ for all $i$, \eqref{lp} yields the standard LP for
Steiner forest, which has an integrality gap of 2~\cite{AKR}.  Our
result thus gives a clear separation between the integrality gaps of
the prize-collecting and standard variants.  It also shows a gap
between the approximation ratios achievable relative to
$\opt_{\text{\ref{lp}}}$ by LMP and non-LMP approximation algorithms
for \PCSF.  To the best of our knowledge, no such gaps were known
previously for an LP for a natural network design problem.  For
example, for Steiner tree, there are no such gaps relative to the
natural undirected LP obtained by specializing \eqref{lp} to \PCST.
(There are however gaps in the current best approximation ratios known
for Steiner tree and \PCST, and approximation ratios achievable for
\PCST via LMP and non-LMP algorithms.)

In order to prove Theorem \ref{thm:pcsf-int-gap} we construct an instance on a large
layered planar graph. Using a result of Carr and Vempala~\cite{CV} it follows
that \eqref{lp} has a gap of $\alpha$ iff $\alpha\cdot (x,z)$ dominates a
convex combination of integral solutions for any feasible solution $(x,z)$. We show that this can only
hold if $\alpha \geq 9/4$. 

In his groundbreaking paper \cite{Jain2001} introducing the iterative rounding method, 
Jain showed that extreme
points $x$ of the Steiner forest LP (and certain generalizations) have
an edge $e$ with $x_e = 0$ or $x_e \geq 1/2$. This then immediately
yields a $2$-approximation algorithm for the underlying
problem, by iteratively deleting an edge of value zero or rounding up an edge of value at least
half to one and proceeding on the residual instance. 
Again, it was long believed that a similar structural result holds for \PCST: extreme points
of (PCST-LP) have an edge variable of value $0$, or a variable
of value at least $1/2$. In fact, there were even stronger conjectures
that envisioned the existence of a $z$-variable with value $1$ in the
case where all edge variables had positive value less than
$1/2$. 
We refute these conjectures.

\begin{theorem}\label{thm:PCST}
  There exists an instance of \PCST where (PCST-LP) has an extreme point 
  with all edge variables positive and all variables having value at most $1/3$.
\end{theorem}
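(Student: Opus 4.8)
The plan is to construct an explicit small \PCST instance, write down a specific feasible solution $(x,z)$ to (PCST-LP) with the desired properties (all $x_e>0$, all coordinates $\le 1/3$), and then argue that this point is in fact an extreme point of the polyhedron. The most efficient route is to reverse-engineer the construction from the extreme-point condition: an extreme point of (PCST-LP) in $\mathbb{R}^{|E|+|R|}$ is the unique solution of a subsystem of $|E|+|R|$ linearly independent tight constraints. Since we want every $x_e>0$ and every $z_v \le 1/3 < 1$, the tight constraints can only be the cut constraints $x(\delta(S)) + z_i \ge 1$ (the nonnegativity constraints $x_e\ge 0$ are all slack). So I would look for a graph together with a family $\mathcal{S}$ of cuts, one ``active'' cut per constraint, whose incidence vectors span the full space and pin down a unique fractional point. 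A natural candidate is a symmetric gadget — e.g.\ built from a cycle or a small grid/wheel-like graph with the root $r$ identified with several terminal pairs $(v,r)$ — chosen so that by symmetry the obvious feasible solution has all edges equal to some value $\le 1/3$.

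Concretely, I would first fix a base graph (I expect something like a cycle $C_n$ with $n$ a multiple of $3$, or a slightly augmented cycle, with $r$ one vertex and the other vertices being terminals, penalties chosen uniformly). On such a graph the cut LP restricted to a single terminal pair $(v,r)$ forces $x(\delta(S))\ge 1-z_v$ across every $v$-$r$ cut; with all $z_v = 1/3$ one needs every $v$-$r$ cut to have $x$-weight at least $2/3$, and on a cycle this is comfortably achieved with all $x_e = 1/3$ (each cut has two edges). The point $x_e = 1/3$ for all $e$, $z_v = 1/3$ for all $v$ is then feasible with all coordinates exactly $1/3$. The work is to either add structure or pick the pairs so that enough cut constraints become tight and linearly independent to certify extremality; on a plain cycle the solution is not extreme (it lives on a face of large dimension), so some asymmetry or additional edges/terminals will be needed to rigidify it while keeping all values $\le 1/3$. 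I would experiment with small cases (say $6$ to $12$ vertices) by hand or mentally, counting tight cuts versus dimension.

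The verification then has two parts. (i) Feasibility: show $x(\delta(S)) + z_i \ge 1$ for every set $S$ with $i\odot S$ — by the cycle/grid structure this reduces to checking that every $s_i$-$t_i$ cut contains at least two chosen edges (or the appropriate weighted count), a finite and routine check exploiting planarity/symmetry. (ii) Extremality: exhibit $|E|+|R|$ specific tight cut constraints and prove their constraint vectors $(\chi_{\delta(S)}, e_i)$ are linearly independent. The cleanest way to do the linear-independence step is to show that any $(d_x, d_z)$ satisfying all the tightness equations with equality must be zero: for each terminal $i$, one tight cut of the form $S=\{\text{single vertex near } s_i\}$ or similar lets me solve for the $d_z$ component in terms of incident $d_x$ components, and then a spanning collection of edge-cuts (e.g.\ the ``thin'' cuts $\delta(\{v\})$ around the cycle, which on a cycle give a circulation condition) forces all $d_x = 0$. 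I should make sure the gadget breaks the one-dimensional circulation freedom a bare cycle would have — this is exactly why a pure cycle fails and why a small modification (an extra chord, an extra terminal pair, or gluing two cycles at $r$) is required.

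I expect the \textbf{main obstacle} to be the extremality argument: it is easy to write down a fractional feasible solution with all values $1/3$, but genuinely pinning it down as a vertex requires the tight-cut incidence vectors to span a space of full dimension $|E|+|R|$, and ``obvious'' symmetric examples tend to have too few tight constraints (leaving a nontrivial subspace of feasible perturbations, typically a circulation on a cycle or a global scaling direction). The design problem is therefore to add just enough structure — asymmetry in the edge set or in which terminal pairs are present — to kill every perturbation direction, while simultaneously keeping the forced $x$-values and $z$-values all at $1/3$ rather than letting some of them creep up toward $1/2$. Once a working gadget is found, both the feasibility check and the rank computation are finite and mechanical, so the creative content is entirely in choosing the instance.
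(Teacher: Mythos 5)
Your proposal correctly identifies the high-level strategy — find a graph and a fractional point whose tight cut constraints pin it down uniquely, observe that since all $x_e>0$ and all $z$ stay below $1$ the nonnegativity constraints give nothing, and note that a bare cycle fails because of the leftover circulation direction. All of that matches the logic of the paper's argument. But you never actually produce the instance: the text says ``I would look for a graph,'' ``I would experiment with small cases,'' and ``once a working gadget is found, both the feasibility check and the rank computation are finite and mechanical.'' That last remark is exactly backwards for the purposes of a proof — the construction \emph{is} the theorem, and everything you leave to future experimentation is the entire content. As written, the proposal establishes nothing; it is a plan for a proof search, not a proof.

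For comparison, the paper's construction is noticeably less symmetric than what you sketch. It does not set all coordinates equal to $1/3$: it uses a ring of $k$ copies of a $10$-node gadget (plus a root $r$ and an auxiliary node $s$), with $x_e \in \{1/k,\,2/k\}$ depending on the edge, and $z_s=0$ while $z_u=1-4/k$ for all other non-root $u$; taking $k=6$ then makes all coordinates at most $1/3$ while the small edges sit at $1/6$. The extremality proof lists a specific family of tight cuts (singletons, pairs, four-node sets, the whole gadget, and $\{r\}$), uses inclusion-exclusion on $x(\delta(\cdot))$ to solve for individual edge values, and then uses the cycle of gadgets together with the tight cut at $r$ and $z_s=0$ to propagate uniqueness globally. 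If you want to salvage your approach, you need to commit to a concrete graph and terminal set, fix the fractional point, write down enough tight cuts, and actually carry out the rank/uniqueness computation — the step you deferred is the step that matters.
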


In~\cite{HN2010} it was shown, that for every vertex $(x,z)$ of
\eqref{lp} (and hence also (PCST-LP)) where $x$ is positive, there is
at least one variable of value at least $1/3$. Moreover for \eqref{lp}
this result is tight, i.e. there are instances of \PCSF such
that for some vertex $(x,z)$ of \eqref{lp}, we have $x>\0$ and all coordinates are at most 
$1/3$. However, no such example was known for (PCST-LP). 

We provide such an example for \PCST, showing that the $1/3$ upper bound on variable
values is tight also for (PCST-LP).

\section{The Integrality Gap for \PCSF}

\subsection{Lower Bound on the Integrality Gap}

We start proving Theorem~\ref{thm:pcsf-int-gap}
by describing the graph for our instance.  Let $P$ be a planar
$n$-node $3$-regular $3$-edge-connected graph (for some large enough $n$ to
be determined later). Note that such graphs exist for arbitrarily large $n$; 
e.g., the graphs of simple 3-dimensional polytopes (such as planar duals of triangulations of a sphere)
have these properties; they are 3-connected by Steinitz's theorem \cite{Steinitz}. 

We obtain $H$ from $P$ by subdividing every
edge $e$ of $P$, so that $e$ is replaced by a corresponding path with $n$ internal
nodes. Let $r$ denote an arbitrary degree-$3$ node in $H$, and call it the root.
Define
$H^{(0)}:=H$ and obtain $H^{(i)}$ from $H^{(i-1)}$ by attaching a copy
of $H$ to each degree-$2$ node $v$ in $H^{(i-1)}$, identifying the root
node of the copy with~$v$; we call this the \emph{copy of $H$ with root $v$}. 
We also define the \emph{parent} of any node $u \neq v$ in this copy to be $v$.
In the end, we let $G:=H^{(k)}$ for some
large $k$, and we let
$\rt$ be the node corresponding to the root of $H^{(0)}$.  
Figure~\ref{fig:peterson} gives an example of this construction. 
Note that each copy of $H$ can be thought of as a subgraph of $G$.

\begin{figure}
    \centering
    \includegraphics[page=1,scale=1.3]{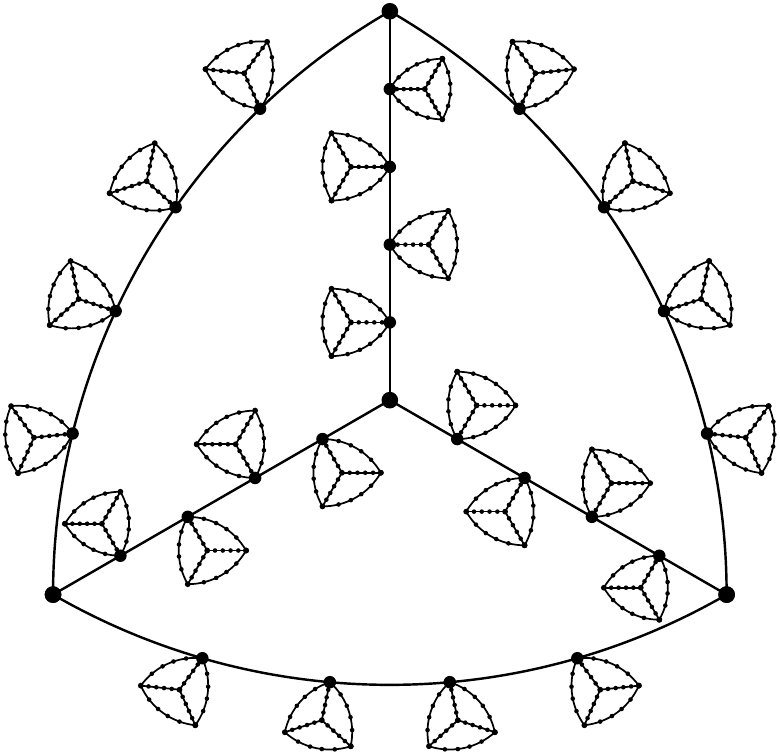}
    \caption{Taking $n=4$, and hence $P$ to be the complete graph on $4$ vertices, the resulting graph $H^{(1)}$ is shown.}
   \label{fig:peterson}
\end{figure}

Next, let us define the source-sink pairs. We introduce a
source-sink pair $s,t$ whenever $s$ and $t$ are degree-$3$ nodes in
the same copy of $H$. We also introduce a source-sink pair $\rt,t$
whenever $t$ is a degree-$2$ node in $G$.

Now let $x_e:=1/3$, for all $e \in E$, $z_{uv}:=0$ if $u$ and $v$ are degree-$3$ nodes
in the same copy of $H$, and $z_{uv}:=1/3$ otherwise. 
(Here and henceforth, we abuse notation slightly and index $z$ by the source-sink terminal pair that it corresponds to.)
Clearly, $(x,z)$
is a feasible solution for~\eqref{lp} by the 3-edge-connectivity of $G$.  

Let $\alpha$ be the integrality gap of~\eqref{lp}. By~\cite{CV} there
is a collection of forests $F_1,\ldots,F_q$ in $G$ (the same forest could appear multiple
times in the collection) such that picking a forest $F$ uniformly at random from
$F_1,\ldots,F_q$ satisfies 

\begin{enumerate}[(a)]
  \item  \label{edge_usage} $\pr{e \in F} \leq \frac{\alpha}{3}$ for all $e \in E$, and
  \item \label{connectivity} Letting $u \sim_F v$ denote the event that $u$ and $v$ are
      connected in $F$, for all $u,v \in V(G)$, we have  
      $\pr{u \sim_F v} \geq (1-\alpha z_{uv}) = \begin{cases} 
      1 & \mbox{if $u$, $v$ are degree-$3$ nodes in the same copy of $H$} \\
      1- \frac{\alpha}{3} & \mbox{if $u=\rt$ and $v$ is a degree-$2$ node in $G$} \end{cases}$
      
\end{enumerate}

We begin by observing that we may assume that each forest $F_1, \ldots, F_q$ induces a tree when restricted to any of the copies of $H$ in $G$.
For consider any $F_i$, and a copy of $H$ with root $v$; call this $H'$.
Every degree-$3$ node in $H'$ is connected to $v$ in $F_i$, by requirement (b).
So consider any degree-$2$ node $u$ in $H'$.
If $u$ is not connected to a degree-$3$ node of $H'$ (and hence to $v$) in $F_i$, 
then any edges of $F_i$ adjacent to $u$ can be safely deleted without destroying any connectivity amongst the source-sink pairs of the instance.

The argument will show that if $\alpha$ is too small, not all degree-$2$ nodes can be connected to $\rt$ with high enough probability. 
More precisely, we will show a geometrically decreasing probability, in $k$.
The intuition is roughly as follows.
Consider a copy $H'$ of $H$ with root $u$, where $u \neq \rt$.
Almost all of the degree-$2$ nodes of $H'$ that are connected to $u$ in $F$ will have degree $2$ in $F$, since $F[H']$ is a tree and $H'$ is made up of long paths.
This is rather wasteful, since both edges adjacent to a typical degree-$2$ node $v$ are used to connect; as each edge appears with probability $\alpha/3$, $v$ can only be part of $F$ (and hence connected to $u$) with probability about $\alpha/3$.
Moreover, we will show that even conditioned on the event that $u$ is \emph{not} connected
to $\rt$, there will be some choice of $v$ such that $v$ is connected to $u$ in $F$ with
probability around $2/3$ (see \eqref{eq1} in Claim~\ref{cl:condit_connect}).
This is again a waste in terms of connectivity to $\rt$.
If $p_i$ denotes the worst connectivity probability amongst nodes in $H^{(i)}$ in the construction, we have
\begin{equation*}
p_{i+1} \lessapprox \tfrac{\alpha}{3} - \tfrac23(1-p_i). 
\tag{\text{\eqref{eq5} is a more precise version of this inequality}} 
\end{equation*}
If $\alpha < 9/4$, this decreases geometrically, providing a counterexample for $n$ large enough.

For now, let us introduce an
abstract event $I$ 
(that the reader may think of as ``an ancestor of node $v$ is not
connected to $\rt$'' motivated by the above discussion).

\begin{claim}\label{cl:condit_connect}
  Let a forest $F$ be picked uniformly at random from
  $F_1,\ldots,F_q$, let $I$ be an event with $\pr{I}>0$ and let $H'$
  be a copy of $H$ in $G$. Then there exists a degree-$2$ node $v$ in $H'$
  such that
\begin{equation}
\label{eq0}
\pr{\deg_{F[H']}(v)=1} \le \frac{2}{n}
\end{equation}
and
\begin{equation}
\label{eq1}
\pr{Q_v\subseteq F \,|\, I}\ge\frac{2(n-1)}{3n}\,,
\end{equation}
where $Q_v$ is the path in $H'$ corresponding to the edge of $P$ containing $v$.

\end{claim}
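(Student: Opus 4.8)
The plan is to obtain \eqref{eq0} and \eqref{eq1} from two short averaging arguments, both driven by the rigidity coming from the already-established fact that every $F$ in our collection induces a tree $F[H']$, which by requirement (b) contains all $n$ degree-$3$ nodes of $H'$. Recall that $P$ is $3$-regular on $n$ vertices, hence has exactly $\tfrac{3n}{2}$ edges, so $H'$ consists of exactly $\tfrac{3n}{2}$ of the subdivided paths $Q$, each with $n$ internal (degree-$2$) nodes.

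\textbf{Step 1 (a spanning-tree lemma, yielding \eqref{eq1}).} First I would show that for \emph{every} such $F$, the paths $Q$ of $H'$ that lie entirely inside $F$ are exactly the edges of a spanning tree of $P$, and hence number exactly $n-1$. The crucial observation is that in the tree $F[H']$ the unique path between two degree-$3$ nodes can ``use'' any $Q$ only by traversing all of it: an internal node of $Q$ has both of its $H'$-edges inside $Q$, so a simple path of $F[H']$ can neither enter nor leave $Q$ at an internal node. Consequently every degree-$3$ node of $H'$ is connected, within $F$, to a fixed degree-$3$ node through completely-contained $Q$'s, so these $Q$'s span $V(P)$; and they contain no cycle of $P$, since such a cycle would be a cycle in the tree $F[H']$. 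Thus exactly $n-1$ of the $\tfrac{3n}{2}$ paths are fully contained, for every $F$; as this count is a constant it survives conditioning on $I$, so $\sum_{Q}\pr{Q\subseteq F \mid I}=n-1$. Averaging over the $\tfrac{3n}{2}$ paths produces a path $Q^{*}$ with $\pr{Q^{*}\subseteq F \mid I}\ge \tfrac{n-1}{3n/2}=\tfrac{2(n-1)}{3n}$, and every internal node $v$ of $Q^{*}$ then satisfies \eqref{eq1} since $Q_v=Q^{*}$.

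\textbf{Step 2 (few degree-one nodes on $Q^{*}$, yielding \eqref{eq0}).} Next I would show that for \emph{every} $F$, at most two of the $n$ internal nodes $v_1,\dots,v_n$ of this now-fixed path $Q^{*}$ have degree $1$ in $F[H']$. The $Q^{*}$-edges present in $F$ split into maximal runs; a run meeting neither endpoint of $Q^{*}$ would be a connected component of $F[H']$ containing no degree-$3$ node, contradicting connectivity of $F[H']$. Hence there are at most two runs, each containing an endpoint of $Q^{*}$, and each contributes at most one degree-$1$ internal node (its far end). Therefore $\sum_{j=1}^{n}\pr{\deg_{F[H']}(v_j)=1}=\E{\#\{j:\deg_{F[H']}(v_j)=1\}}\le 2$, so some $v_j=:v$ has $\pr{\deg_{F[H']}(v)=1}\le 2/n$, which is \eqref{eq0}; and being an internal node of $Q^{*}$, this $v$ also satisfies \eqref{eq1}.

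I expect the main obstacle to be Step 1: making precise that the fully-contained paths form exactly a spanning tree of $P$, in particular the sub-claim that a path of $F[H']$ between degree-$3$ nodes cannot traverse a proper sub-path of any $Q$. After that the proof is just the two one-line averaging bounds. The one point to be careful about is that $Q^{*}$ is chosen using $I$, whereas the degree-$1$ bound used for \eqref{eq0} is an \emph{unconditional} statement about the fixed path $Q^{*}$, so the two steps do not interfere.
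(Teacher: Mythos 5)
Your proposal is correct and follows essentially the same route as the paper: both prove that for each forest $F_i$ the set $\{e\in E(P): Q_e\subseteq F_i[H']\}$ is a spanning tree of $P$ (so exactly $n-1$ paths are fully contained), average over the $3n/2$ edges of $P$ conditioned on $I$ to get $Q^*$, and then observe that at most two internal nodes of $Q^*$ can be leaves of the tree $F[H']$ and average unconditionally over the $n$ internal nodes. The only difference is that you spell out the justifications for the spanning-tree fact and the ``at most two leaves'' fact (via the observation that internal nodes of a $Q_e$ cannot be entered or exited partway, and via the maximal-runs argument), which the paper states tersely as consequences of $F[H']$ being a tree.
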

\begin{proof}
  The event $I$ corresponds to a nonempty multiset $\mathcal{F}\subseteq\{F_1,\ldots,F_q\}$ of the forests.
  Each of $F_1[H']$, \ldots, $F_q[H']$ is a tree, by our earlier assumption, and so each of them naturally induce a spanning tree of $P$.
  More precisely, for each $e \in E(P)$, let $Q_e$ denote the corresponding path in $H'$; then $\{ e \in E(P): Q_e \subseteq F_i[H']\}$ is a spanning tree for each $i$.
  Thus 
  \[\sum_{e\in E(P)} |\{F'\in\mathcal{F} : Q_e\subseteq F' \}| 
  = \sum_{F'\in\mathcal{F}} |\{e\in E(P) : Q_e\subseteq F' \}| 
  = \sum_{F'\in\mathcal{F}} (n-1) 
  = |\mathcal{F}| (n-1).\]
  So there is an edge $f\in E(P)$ for which 
  \[ \pr{Q_f\subseteq F \,|\, I} = \frac{|\{F'\in\mathcal{F} : Q_f\subseteq F' \}|}{|\mathcal{F}|} \ge \frac{(n-1)}{|E(P)|} = \frac{2(n-1)}{3n}. \]

  At most two of the nodes on $Q_f$ are leaves in any
  of $F_1[H']$, \ldots, $F_q[H']$ (again since they are all trees).
  The total number of
  degree-$2$ nodes in $H'$ lying on $Q_f$ is $n$, so there exists a
  degree-$2$ node $v$ in $H'$ such that $v\in Q_f$ and
  $\pr{\deg_{F[H']}(v)=1} \le \frac{2}{n}$.
 \end{proof}

\begin{claim}\label{cl:not_connected_node}
    Let $\epsilon > 0$ be given. Then for $n$ and $k$ chosen sufficiently large,
  there exists a degree-$2$ node $u$ in $G$ such
  that
  \[
  \pr{u\sim_F \rt} \le \alpha -2+\epsilon\,,
  \]
  where $F$ is a uniformly random forest from $F_1, \ldots, F_q$.
\end{claim}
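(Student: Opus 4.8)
The plan is to realize the intuition sketched before the claim: build a chain of copies of $H$, each sitting one construction-layer deeper than the previous, and track along it the probability $q_i$ that a carefully chosen node $v_i$ is connected to $\rt$, showing that these probabilities obey a contracting recursion whose limit is just below $\alpha-2+\epsilon$. First, observe that we may assume $\alpha<3$: otherwise $\alpha-2+\epsilon>1\ge\pr{u\sim_F\rt}$ for \emph{every} node $u$ of $G$, so the claim is trivially true.

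I would construct nodes $v_0,v_1,\dots,v_k$ recursively. For $v_0$, apply Claim~\ref{cl:condit_connect} to $H^{(0)}$ (with the trivial conditioning event) to obtain a degree-$2$ node $v_0$ of $H^{(0)}$ with $\pr{\deg_{F[H^{(0)}]}(v_0)=1}\le 2/n$. Inductively, given $v_i$, which is a degree-$2$ node of $H^{(i)}$, let $H'_{i+1}$ be the copy of $H$ with root $v_i$ appearing in $H^{(i+1)}$, and apply Claim~\ref{cl:condit_connect} to $H'_{i+1}$ with the event $I_{i+1}:=\{v_i\not\sim_F\rt\}$ to obtain a degree-$2$ node $v_{i+1}$ of $H'_{i+1}$ satisfying \eqref{eq0} and \eqref{eq1} for this node; concretely $\pr{\deg_{F[H'_{i+1}]}(v_{i+1})=1}\le 2/n$ and $\pr{Q_{v_{i+1}}\subseteq F\mid I_{i+1}}\ge\tfrac{2(n-1)}{3n}$. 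Set $u:=v_k$, which is a degree-$2$ node of $G=H^{(k)}$, and write $q_i:=\pr{v_i\sim_F\rt}$.

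The core of the argument is a recursion for $q_i$, based on three structural observations. (i) $v_i$ is a cut vertex of $G$ separating $H'_{i+1}$ (and everything hanging below it) from $\rt$, so $\{v_{i+1}\sim_F\rt\}=\{v_{i+1}\sim_Fv_i\}\cap\{v_i\sim_F\rt\}$. (ii) If $v_{i+1}\sim_Fv_i$ then $v_{i+1}$ has an edge of $H'_{i+1}$ in $F$, so by \eqref{eq0} and requirement (a) applied to the two edges of $H'_{i+1}$ at $v_{i+1}$ we get $\pr{v_{i+1}\sim_Fv_i}\le\tfrac2n+\tfrac\alpha3$. (iii) If $Q_{v_{i+1}}\subseteq F$ then $v_{i+1}\sim_Fv_i$, since the degree-$3$ endpoints of $Q_{v_{i+1}}$ are connected to $v_i$ in $F$ by requirement (b). Writing $q_{i+1}=\pr{v_{i+1}\sim_Fv_i}-\pr{v_{i+1}\sim_Fv_i,\ I_{i+1}}$, bounding the first term by (ii) and the second from below via (iii), \eqref{eq1}, and $\pr{I_{i+1}}=1-q_i$, gives
\[ q_{i+1}\ \le\ \frac{\alpha}{3}+\frac{2}{n}-\frac{2(n-1)}{3n}\,(1-q_i), \]
and the same reasoning (with the factoring in (i) degenerating, since $v_0$ is connected to $\rt$ iff it has an edge of $H^{(0)}$ in $F$) gives $q_0\le\tfrac\alpha3+\tfrac2n$. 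The recursion is valid at every step: inductively $q_i\le\tfrac\alpha3+\tfrac2n<1$ for $n$ large enough — this is where $\alpha<3$ is used — so $\pr{I_{i+1}}=1-q_i>0$ and Claim~\ref{cl:condit_connect} may indeed be invoked.

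Finally I would unroll the recursion. With $\lambda:=\tfrac{2(n-1)}{3n}\in(0,1)$ and $c:=\tfrac\alpha3+\tfrac2n$ it reads $q_{i+1}\le\lambda q_i+(c-\lambda)$, whence $q_k\le\lambda^k q_0+\tfrac{c-\lambda}{1-\lambda}$; a direct computation gives $\tfrac{c-\lambda}{1-\lambda}=\tfrac{(\alpha-2)n+8}{n+2}\le(\alpha-2)+\tfrac{12}{n+2}$. Choosing $n$ large enough that $\tfrac{12}{n+2}<\epsilon/2$ (and $\tfrac\alpha3+\tfrac2n<1$), and then $k$ large enough that $\lambda^k q_0<\epsilon/2$, yields $\pr{u\sim_F\rt}=q_k\le\alpha-2+\epsilon$. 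The main obstacle is getting the recursion right — in particular the cut-vertex factoring in (i), the upper bound on $\pr{v_{i+1}\sim_Fv_i}$ from the degree structure in (ii), and ensuring that $I_{i+1}$ is nonempty so that Claim~\ref{cl:condit_connect} applies; once those are in place the rest is routine.
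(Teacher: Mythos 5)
Your proof is correct and follows the paper's argument essentially step for step: build the chain of nested copies $H'_1, H'_2, \ldots$, apply Claim~\ref{cl:condit_connect} at each level with the event $\{v_i \not\sim_F \rt\}$, and derive the same contracting recursion $q_{i+1} \le \tfrac{\alpha}{3} + \tfrac{2}{n} - \tfrac{2(n-1)}{3n}(1-q_i)$ that the paper records as \eqref{eq5}. The one small departure is how you ensure $\pr{I_{i+1}}>0$ so that Claim~\ref{cl:condit_connect} can be invoked: the paper cites the known upper bound $\alpha \le 2.5415$ on the integrality gap and uses monotonicity $\pr{r_i\not\sim_F\rt}\ge\pr{r_{i-1}\not\sim_F\rt}$, whereas you note that one may assume $\alpha<3$ without loss of generality (otherwise the claim is vacuous) and then propagate $q_i\le\tfrac{\alpha}{3}+\tfrac{2}{n}<1$ through the induction; your variant is a bit more self-contained but reaches the same place.
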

\begin{proof}
  Consider the root copy $H^{(0)}$ of $H$, with root $\rt$. 
  Set $H_0=H^{(0)}$.
  Pick a degree-$2$ node $v$ in $H_0$ that satisfies \eqref{eq0} in
  Claim \ref{cl:condit_connect} for the trivial event
  $I:= \{r_0 \sim_F r_0\}$ and $H':=H_0$. Let $r_1:=v$. Note that 
  $$
  \pr{r_1\sim_F\rt}= \pr{\deg_{F[H_0]}(r_1)=2} + \pr{\deg_{F[H_0]}(r_1)=1} 
  \le \frac{\alpha}{3} + \frac2n<1\,.
  $$
  The first inequality follows from \ref{edge_usage} and \eqref{eq0}, and the second
  since $\alpha\leq 2.5415$. Therefore $\pr{r_1\not\sim_F\rt}>0$.

  Suppose that we have defined $(H_0,r_1),(H_1,r_2),\ldots,(H_{i-1},r_i)$ for
  some $i$ with $1\le i\le k$, such that the following hold for all $1\leq j\leq i$: 
  (i) $r_j$ is a degree-2 node in $H_{j-1}$, and $r_{j-1}$ is the root of $H_{j-1}$;
  (ii) $\pr{r_{j-1}\not\sim_F\rt}>0$ if $j\geq 2$;
  (iii) if $j\geq 2$, then \eqref{eq0} and \eqref{eq1} hold in
  Claim~\ref{cl:condit_connect} for $H'=H_{j-1}$, $I=\{r_{j-1}\not\sim_F\rt\}$ and $v=r_j$.  
  We now show how to define $H_i$ and $r_{i+1}$ such that the above properties continue to
  hold for $j=i+1$. 

  First, set $H_i$ to be the copy of $H$ whose root is $r_i$. 
  We have $\pr{r_{i} \not \sim_F \rt}\ge \pr{r_{i-1}\not\sim_F \rt}>0$, so property (ii)
  continues to hold. 
  Given this, pick a degree-$2$ node $v$ in $H_i$ that satisfies \eqref{eq0} and \eqref{eq1} in 
  Claim~\ref{cl:condit_connect} for the event
  $I:= \{r_i \not \sim_F \rt\}$ and $H':=H_i$. Set $r_{i+1}:=v$. Thus, properties (i) and (iii)
  continue to hold as well.

\bigskip
For $j\in \{0,\ldots, k\}$, due to the choice of $r_{j+1}$ and \eqref{eq0}, we have
$\pr{\deg_{F[H_{j}]}(r_{j+1})=1} \le \frac{2}{n}$ 
and thus
\begin{equation}
  \label{eq3}
  \pr{r_{j+1}\sim_F r_j}= \pr{\deg_{F[H_j]}(r_{j+1})=2} + \pr{\deg_{F[H_j]}(r_{j+1})=1} 
  \le \frac{\alpha}{3} + \frac2n\,. 
\end{equation}
For $j\in \{1,\ldots, k\}$, due to \eqref{eq1} and the choice of $r_{j+1}$, we get
\begin{align}
\pr{Q_{r_{j+1}}\subseteq F \,\land\, r_j\not \sim_F r_0}
&=\pr{Q_{r_{j+1}}\subseteq F \,|\, r_j\not \sim_F r_0} \cdot\pr{r_j\not \sim_F r_0}\notag \\
&\ge\frac{2(n-1)}{3n}\pr{r_j\not\sim_F r_0}\notag \\
&\geq \frac{2}{3} \bigl(1-\pr{r_j\sim_F r_0} \bigr) - \frac{2}{3n}\,.\label{eq4}
\end{align}

Hence, for $j\in\{0,\ldots,k\}$,
\begin{align}
\pr{r_{j+1}\sim_F r_0} &=\pr{r_{j+1}\sim_F  r_j \,\land\, r_j\sim_F r_0} \notag \\
&=\pr{r_{j+1}\sim_F r_j} - \pr{r_{j+1}\sim_F r_j \,\land\, r_j\not \sim_F r_0} \notag \\
&\le \pr{r_{j+1}\sim_F r_j} - \pr{Q_{r_{j+1}}\subseteq F \,\land\, r_j\not \sim_F r_0}\notag \\
&\le \frac{\alpha}{3} + \frac{2}{n} -\frac{2}{3}+\frac{2}{3}\pr{r_j\sim_F r_0} + \frac{2}{3n}\,,
\label{eq5}
\end{align}
where  the first inequality follows from the fact that $r_{j+1}\sim_F r_j$ holds whenever $Q_{r_{j+1}}\subseteq F $ holds,
and the second inequality follows from \eqref{eq3} and \eqref{eq4}.

Expanding the recursion, we get 
\begin{align*}
\pr{r_{k+1}\sim_F r_0}\le \left( \frac{\alpha-2}{3} + \frac{8}{3n} \right) \sum_{i=0}^{k} \left(\frac{2}{3}\right)^{i} + \left(\frac{2}{3}\right)^{k+1},
\end{align*}
so for $n$ and $k$ large enough we obtain  
\begin{align*}
\pr{r_{k+1}\sim_F r_0}\le \left( \frac{\alpha-2}{3} + \frac{\epsilon}{6} \right) \sum_{i=0}^{\infty} \left(\frac{2}{3}\right)^{i} + \frac{\epsilon}{2} = \alpha -2+ \epsilon.
\end{align*}

Since $u:=r_{k+1}$ is a degree-$2$ node in $G$, the proof is complete.
\end{proof}

Now, we can prove the first part of Theorem~\ref{thm:pcsf-int-gap}. By Claim~\ref{cl:not_connected_node} and property~\ref{connectivity} of the collection of forests, we get the inequality
$$
\alpha -2 \ge 1-\alpha/3\,,
$$
leading to $\alpha\ge 9/4$.

\subsection{The Integrality Gap is Tight for the Construction}
We note that for any $n$ and $k$, the \PCSF instance given by our construction has integrality gap at most $9/4$. 
More generally, we show that the integrality gap over \PCSF instances which admit
a feasible solution $(x,z)$ to \eqref{lp} with $z_i\in\{0,1/3\}$ for all $i$, is at most $9/4$.
(That is, the maximum ratio between the optimal values of the IP and the LP for such 
instances is at most $9/4$.) This nicely complements our integrality-gap lower bound, and shows
that our analysis above is tight (for such instances).

\bigskip

To show the first statement, we simply provide a distribution over forests $F_1$,\ldots,
$F_q$ satisfying \ref{edge_usage} and \ref{connectivity}. (The next paragraph, which
proves the second claim above, gives another proof.) 
Since $(2(n-1)/(3n))\cdot \1$ is in the spanning tree polytope of $P$, 
there is a list of spanning trees such that every edge is contained in less than $2/3$ of them.
Consider the following distribution of forests. 
With probability $3-\alpha$ we pick one of these spanning trees of $P$ uniformly at random 
and subdivide it to obtain a tree in $H$; we take this tree in each copy of $H$ to obtain a (non-spanning) tree in $G$.
With probability $\alpha - 2$ we pick an arbitrary spanning tree of $G$.
This random forest $F$ satisfies
\begin{align*}
\pr{e \in F} \le (\alpha - 2) \cdot 1 + (3-\alpha) \cdot \frac{2}{3}= \frac{\alpha}{3}.
\end{align*}
Thus~\ref{edge_usage} holds for the above distribution. To see that~\ref{connectivity} holds, note that for every degree-$2$ node $v$ in $G$ we have  $\pr{v \sim_F \rt}\ge \alpha -2 = 1-\alpha/3$.

\bigskip

For the second claim, we utilize threshold rounding to show that the integrality gap is at
most $9/4$ for such instances. 
Consider an instance of $\PCSF$ and a feasible point $(x,z)$ for \eqref{lp} 
such that the values of $z$-variables are $0$ or $\gamma$ for some fixed $\gamma$ with $0<\gamma<1/2$. 
Using \cite{AKR,GW}, we can obtain an integer solution of cost at most $2c^\tp x+\pi^\tp z/\gamma$
by paying the penalties for all pairs with a non-zero $z$ value. 
We can also obtain a solution of cost at most $2c^\tp x/(1-\gamma)$ by connecting all pairs.
Therefore, for any $p\in[0,1]$, we can obtain an integer solution of cost at most
$$
p\left(2c^\tp x+\frac{\pi^\tp z}{\gamma}\right)+(1-p)\left(\frac{2c^\tp x}{1-\gamma}\right)
\leq\max \left\{ \frac{2-2p\gamma}{1-\gamma},\, \frac{p}{\gamma} \right\}(c^\tp x+\pi^\tp z)
$$
showing that the integrality gap is at most
$$
\mu:=\min_{0\le p\le 1} \max \left\{ \frac{2-2p\gamma}{1-\gamma},\, \frac{p}{\gamma} \right\}\,.
$$
The number $\mu$ is at most $2/(2\gamma^2-\gamma+1)$, which is equal to $9/4$ for $\gamma=1/3$. 
Note that for $\gamma=1/4$ the $2/(2\gamma^2-\gamma+1)$ achieves its maximum value of $16/7$.

\subsection{Lagrangian-Multiplier Preserving Approximation Algorithms for
  $\PCSF$} \label{lmp} 

Recall that a $\beta$-Lagrangian-multiplier-preserving (LMP) approximation algorithm
for $\PCSF$ is an approximation algorithm that returns a forest $F$ satisfying
\[ c(F) + \beta\cdot \pen{F} \leq \beta \cdot \opt\,. \]
We show that we must have $\beta\geq 4$ in order to obtain a $\beta$-LMP algorithm
relative to the optimum of the LP-relaxation \eqref{lp}, that is, to obtain the guarantee   
$c(F) + \beta\cdot \pen{F} \leq \beta \cdot \opt_{\text{\ref{lp}}}$. 
To obtain this lower bound, we modify our earlier construction slightly. We construct
$G=H^{(k)}$ in a similar fashion as before, but we now choose $P$ (the ``base graph'') to
be an $n$-node $l$-regular $l$-edge-connected graph. Let $x_e:=1/l$ for all
$e\in E$, and let $z_{uv}:=0$ if $u$ and $v$ are degree-$l$ nodes in
the same copy of $H$, and $z_{uv}:=1-2/l$ otherwise.

By arguments similar to~\cite{CV} (see, e.g., the proof of Theorem 7.2
in~\cite{GeorgiouS13}, and Theorem~\ref{lmp-theorem} in the Appendix), one can show that if
there exists a $\beta$-LMP approximation algorithm for \PCSF relative to \eqref{lp} then
there are forests $F_1$,\ldots, $F_q$ in $G$ (the same forest could appear multiple times)
such that picking a forest $F$ uniformly at random from $F_1,\ldots,F_q$
satisfies \smallskip 
\begin{enumerate}[(a')]
  \item \label{lmp_edge_usage} $\pr{e \in F} \leq \frac{\beta}{l}$ for all $e \in E$, and
  \item \label{lmp_connectivity} $\pr{u \sim_F v} \geq (1- z_{uv}) = \begin{cases} 
      1 & \mbox{$u$, $v$ are degree-$l$ nodes in the same copy of $H$} \\
      \frac{2}{l} & \mbox{if $u=\rt$ and $v$ is a degree-$2$ node in $G$}  \end{cases}$    
       
      for all $u,v \in V(G)$. 
\end{enumerate}

It is straightforward to obtain the analogues of Claim~\ref{cl:condit_connect} and Claim~\ref{cl:not_connected_node}.
\begin{claim}
Let a forest $F$ be picked uniformly at random from $F_1,\ldots,F_q$, let $I$ be an event with $\pr{I}>0$ and let $H'$ be a copy of $H$ in $G$. 
There exists a degree-$2$ node $v$ in $H'$, such that 
\begin{equation}
\pr{\deg_{F[H']}(v)=1} \le \frac{2}{n}
\end{equation}
and
\begin{equation}
\pr{Q_v\subseteq F \,|\, I}\ge\frac{2(n-1)}{l n}\,,
\end{equation}
where $Q_v$ is the path in $H'$ that contains $v$ and corresponds to an edge of $P$.
\end{claim}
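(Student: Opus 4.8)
The plan is to mimic, essentially verbatim, the proof of Claim~\ref{cl:condit_connect}, replacing the constant $3$ (the degree/edge-connectivity of the original base graph $P$) with $l$. First I would record that, exactly as before, we may assume each $F_i$ induces a tree when restricted to any copy of $H$ in $G$: every degree-$l$ node of a copy $H'$ with root $v$ is connected to $v$ by requirement (b'), and any edges of $F_i$ incident to a degree-$2$ node of $H'$ that is not connected to $v$ can be deleted without breaking connectivity of any terminal pair. Hence each $F_i[H']$ is a spanning tree of the degree-$l$ nodes of $H'$, and — contracting each subdivided path $Q_e$ back to the edge $e$ of $P$ — it induces a spanning tree of $P$.

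Next I would run the averaging argument. The event $I$ corresponds to a nonempty multiset $\mathcal{F}\subseteq\{F_1,\dots,F_q\}$. Since $P$ has $n$ nodes, every spanning tree of $P$ has exactly $n-1$ edges, and $P$ being $l$-regular has $|E(P)| = ln/2$. Double-counting pairs $(F',e)$ with $Q_e\subseteq F'$ gives
\[
\sum_{e\in E(P)} \bigl|\{F'\in\mathcal F : Q_e\subseteq F'\}\bigr| \;=\; \sum_{F'\in\mathcal F}\bigl|\{e\in E(P): Q_e\subseteq F'\}\bigr| \;=\; |\mathcal F|\,(n-1),
\]
so some $f\in E(P)$ satisfies $\bigl|\{F'\in\mathcal F:Q_f\subseteq F'\}\bigr|\ge |\mathcal F|(n-1)/|E(P)|$, i.e.
\[
\pr{Q_f\subseteq F \mid I} \;\ge\; \frac{n-1}{|E(P)|} \;=\; \frac{2(n-1)}{ln}.
\]
Then, exactly as in Claim~\ref{cl:condit_connect}, the path $Q_f$ contains $n$ degree-$2$ nodes of $H'$ but at most two of them can be leaves in any $F_i[H']$ (since those are trees), so by a union bound over the two "leaf slots" there is a degree-$2$ node $v\in Q_f$ with $\pr{\deg_{F[H']}(v)=1}\le 2/n$. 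Taking this $v$ gives both displayed inequalities simultaneously, and $Q_v = Q_f$ is the path of $H'$ corresponding to an edge of $P$, as required.

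There is no real obstacle here — the only thing to be slightly careful about is the bookkeeping of $|E(P)| = ln/2$ and that the argument "at most two leaves on $Q_f$" still works, which it does because it only used that each $F_i[H']$ is a tree and that $Q_f$ is a path with $n$ internal degree-$2$ nodes; neither fact depends on the value of $l$. So the proof is a word-for-word transcription of the $l=3$ case with the constant updated, and the subsequent analogue of Claim~\ref{cl:not_connected_node} (and hence the $\beta\ge 4$ bound, obtained by optimizing over $l$) then follows by the same recursion as before, with $\alpha/3$ replaced by $\beta/l$ and $2/3$ by $2(n-1)/(ln)$.
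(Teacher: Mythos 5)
Your proof is correct and is exactly the paper's intended approach: the paper states the $l$-regular analogue of Claim~\ref{cl:condit_connect} without proof, noting it is straightforward, and your transcription with $3$ replaced by $l$ and $|E(P)|=ln/2$ is precisely what is meant. One minor nit: the phrase ``union bound over the two leaf slots'' is imprecise — the correct mechanism is an averaging/pigeonhole argument (the expected number of leaves of $F[H']$ on $Q_f$ is at most $2$, so some one of the $n$ internal nodes is a leaf with probability at most $2/n$) — but the surrounding text makes clear this is what you mean, and the argument is sound.
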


\begin{claim} \label{cl:lmp_not_connected_node}
Let $\epsilon > 0$ be given.
Then for $n$ and $k$ sufficiently large, 
and choosing $F$ uniformly at random from $F_1,\ldots,F_q$, 
there exists a degree-$2$ node $u$ in $G$ such that
$$
\pr{u\sim_F \rt} \le \frac{\beta -2}{l-2}+\epsilon\,.
$$
\end{claim}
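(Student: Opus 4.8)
\textbf{Proof proposal for Claim~\ref{cl:lmp_not_connected_node}.}
The plan is to repeat the inductive construction in the proof of Claim~\ref{cl:not_connected_node} verbatim, with the ``$\tfrac13$''s replaced by the appropriate $l$-dependent quantities coming from properties~(a') and~(b'). I would set $H_0 = H^{(0)}$ and pick $r_1$ to be a degree-$2$ node in $H_0$ satisfying the analogue of~\eqref{eq0} for the trivial event $I:=\{\rt\sim_F\rt\}$. Then, inductively, having defined $(H_0,r_1),\dots,(H_{i-1},r_i)$ with $r_i$ a degree-$2$ node of $H_{i-1}$, I let $H_i$ be the copy of $H$ rooted at $r_i$, check that $\pr{r_i\not\sim_F\rt}>0$ (which follows because $\pr{r_i\not\sim_F\rt}\ge\pr{r_{i-1}\not\sim_F\rt}>0$, exactly as before, using that the bound on $\pr{r_i\sim_F r_{i-1}}$ is strictly below $1$ for $l$ fixed and $n$ large), and then pick $r_{i+1}$ to be a degree-$2$ node of $H_i$ satisfying the two conclusions of the analogue claim for $I:=\{r_i\not\sim_F\rt\}$ and $H':=H_i$.

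The key inequalities become: from~(a') and the $\le 2/n$ bound on $\pr{\deg_{F[H_j]}(r_{j+1})=1}$,
\begin{equation*}
\pr{r_{j+1}\sim_F r_j}\le\frac{\beta}{l}+\frac2n\,,
\end{equation*}
and from the conditional-path bound $\pr{Q_{r_{j+1}}\subseteq F\mid r_j\not\sim_F\rt}\ge\tfrac{2(n-1)}{ln}$,
\begin{equation*}
\pr{Q_{r_{j+1}}\subseteq F\,\land\,r_j\not\sim_F\rt}\ge\frac{2}{l}\bigl(1-\pr{r_j\sim_F\rt}\bigr)-\frac{2}{ln}\,.
\end{equation*}
Combining these as in the derivation of~\eqref{eq5}, and using that $r_{j+1}\sim_F r_j$ whenever $Q_{r_{j+1}}\subseteq F$, gives the recursion
\begin{equation*}
\pr{r_{j+1}\sim_F\rt}\le\frac{\beta}{l}+\frac2n-\frac2l+\frac{2}{ln}+\frac2l\pr{r_j\sim_F\rt}\,.
\end{equation*}
Since $2/l<1$ for $l\ge 3$, expanding this recursion over $j=0,\dots,k$ yields a geometric series with ratio $2/l$, converging to $\tfrac{(\beta-2)/l}{1-2/l}=\tfrac{\beta-2}{l-2}$ plus lower-order terms of size $O(1/n)$ and an exponentially small $O((2/l)^{k})$ term. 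Choosing $n$ and $k$ large enough makes these error terms less than $\epsilon$, and taking $u:=r_{k+1}$ (a degree-$2$ node of $G$) completes the proof.

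I do not expect any genuine obstacle here; this is a routine re-parametrization of the proof of Claim~\ref{cl:not_connected_node}. The only point requiring a moment's care is verifying $\pr{r_i\not\sim_F\rt}>0$ at each step so that the conditioning in the analogue of Claim~\ref{cl:condit_connect} is legitimate: this needs $\tfrac{\beta}{l}+\tfrac2n<1$, which holds because $l\ge 3$, $\beta$ is a fixed constant (any meaningful LMP guarantee has $\beta<l$ once $l$ is chosen large relative to the best known ratio, and in any case we are free to choose $l$ with $\beta<l$), and $n$ is taken sufficiently large. Given Claim~\ref{cl:lmp_not_connected_node}, property~(b') forces $\tfrac{\beta-2}{l-2}\ge\tfrac2l$, i.e. $\beta\ge 2+\tfrac{2(l-2)}{l}=4-\tfrac4l$; letting $l\to\infty$ (over the sequence of constructions) yields $\beta\ge 4$, which is the desired lower bound stated in Theorem~\ref{thm:pcsf-int-gap}.
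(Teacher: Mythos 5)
Your proof is correct and matches the approach the paper intends (the paper itself only sketches Claim~\ref{cl:lmp_not_connected_node} by saying it is a straightforward analogue of Claim~\ref{cl:not_connected_node}). Your re-parametrization is exactly right: the degree-$2$ node $v$ has two incident edges, so $\pr{\deg_{F[H_j]}(r_{j+1})=2}\le\beta/l$ by (a'), and $|E(P)|=nl/2$ gives the $\tfrac{2(n-1)}{ln}$ conditional bound, leading to the recursion with ratio $2/l$ and fixed point $\tfrac{\beta-2}{l-2}$. You are also right to flag the one delicate point: the conditioning on $\{r_i\not\sim_F\rt\}$ requires $\tfrac\beta l+\tfrac2n<1$, i.e.\ $\beta<l$ (for $n$ large); but if $\beta\ge l$ the claimed bound $\tfrac{\beta-2}{l-2}\ge 1$ is vacuous, so one may assume $\beta<l$ without loss of generality.
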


For the node $u$ from Claim~\ref{cl:lmp_not_connected_node}, we have $(\beta -2)/(l-2)\ge \pr{v\sim_F \rt} \ge 2/l$. 
Thus, $\beta$ is at least $4-4/l$, which approaches  $4$ as $l$ increases. This completes
the proof of the second part of Theorem~\ref{thm:pcsf-int-gap}. 

Moreover, the analysis is tight for the above construction. 
For a solution to \eqref{lp} where $z$ takes on only two distinct values,  say $0$ and $\gamma$, 
threshold rounding shows that for $\beta=2+2\gamma<4$ the desired collection of forests exists. 
However, for an unbounded number of distinct values of $z$, no constant-factor upper bound is known. 

\section{An Extreme Point for $\PCST$ with All Values at most \boldmath$\frac13$}

In this section we present a proof of Theorem \ref{thm:PCST}.  Take an
integer $k\ge 4$ and consider the graph $G=(V,E)$ in
Figure~\ref{fig:PCST_example}. Here, the nodes $v_1$,\ldots, $v_k$
represent the gadgets shown in Figure~\ref{fig:PCST_gadget}. The
gadget consists of ten nodes, and there are precisely four edges
incident to a node in the gadget. We let $r$ to be the root node
and introduce a source-sink node pair $(v,r)$ for every node
$v\in V\setminus\{r\}$.

\begin{figure}[ht]
\centering
\begin{tikzpicture}[inner sep = 2.5pt, yscale=.5, xscale=1.5]
\def\distEpsilon{0.9}
\tikzstyle{vtx}=[circle,draw,thick]
\tikzstyle{gdt}=[rectangle,draw,thick]
\tikzstyle{cut}=[line width=2.5pt]
\tikzstyle{anedge}=[thick] 
\node[vtx] (r) at (3,6) {$r$};
\node[gdt] (v1) at (0,0) {$v_1$};
\node[gdt] (v2) at (1,0) {$v_2$};
\node[gdt] (v3) at (2,0) {$v_3$};
\node[gdt] (v4) at (5,0) {$v_{k-1}$};
\node[gdt] (v5) at (6,0) {$v_{k}$};
\node (dots) at (3.5,0){$\cdots\cdots$};
\node[vtx] (s) at (3,-6) {$s$};
\draw[anedge] (s) [decorate]-- (v1);
\draw[anedge] (s) [decorate]-- (v2);
\draw[anedge] (s) [decorate]-- (v3);
\draw[anedge] (s) [decorate] -- (v4);
\draw[anedge] (s) [decorate] -- (v5);
\draw[anedge] (r) [decorate] -- (v1);
\draw[anedge] (r) [decorate] -- (v2);
\draw[anedge] (r) [decorate] -- (v3);
\draw[anedge] (r) [decorate] -- (v4);
\draw[anedge] (r) [decorate] -- (v5);
\draw[anedge] (v1) [decorate]-- (v2);  
\draw[anedge] (v2) [decorate]-- (v3);
\draw[anedge] (v3) [decorate]-- (3,0);      
\draw[anedge] (v4) [decorate]-- (4,0);      
\draw[anedge] (v4) [decorate]-- (v5);         
\draw[anedge] (v1) [decorate, bend right = 60] to (v5);        
\draw[cut, rounded corners=15pt, purple]
  ($(r)-(\distEpsilon,2*\distEpsilon)$) rectangle ($(r)+(\distEpsilon,\distEpsilon)$);
     
\end{tikzpicture}
\caption{Here, each of the nodes $v_1$,\ldots, $v_k$ corresponds to the gadget in Figure~\ref{fig:PCST_gadget}. 
Additionally, a cut $\{r\}$ is marked as a tight constraint in (PCST-LP) for the constructed point $(x,z)$. 
$x_e = 1/k$ for all edges $e$.}
\label{fig:PCST_example}
\end{figure}
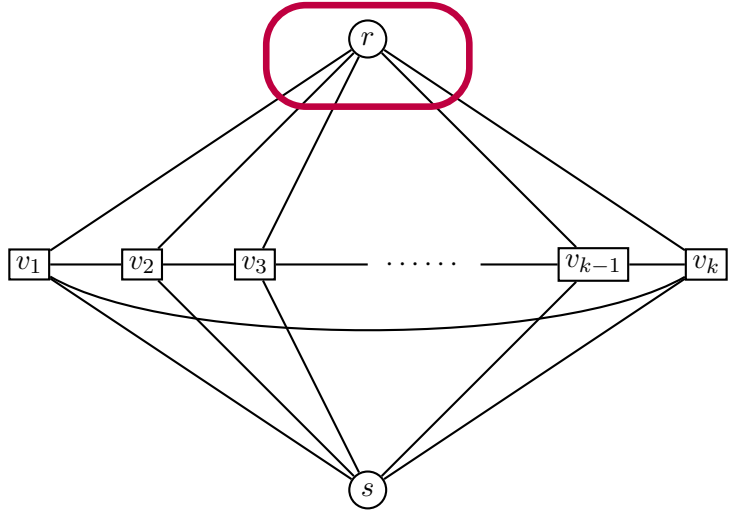

In the case $k=6$, the next claim proves Theorem~\ref{thm:PCST}.

\begin{claim}\label{claimpcstextremepoint}
The following is an extreme point of (PCST-LP) for this instance: 
$z_s=0$ and $z_u=1-4/k$ for every node $u$ in $V\setminus\{r,s\}$. 
For the wavy edges in Figure~\ref{fig:PCST_gadget}, we have $x_{u_1u_2}:=x_{u_3u_4}:=x_{u_5u_6}:=x_{u_7u_8}:=x_{u_9u_{10}}:=2/k$, 
and $x_e = 1/k$ for all the other edges $e$.
\end{claim}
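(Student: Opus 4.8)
The plan is to verify, in order: (i) that $(x,z)$ is feasible for (PCST-LP); (ii) that it is an extreme point, i.e.\ the inequalities of (PCST-LP) holding with equality at $(x,z)$ have rank $|E|+|V\setminus\{r\}|$; and (iii) that for $k=6$ every coordinate is at most $\tfrac13$ and every edge coordinate is strictly positive. Granting (i) and (ii), part (iii) is immediate: $z_s=0$, $z_u=1-\tfrac46=\tfrac13$ for each of the $10k$ gadget nodes $u$, and each $x_e$ equals $\tfrac26=\tfrac13$ (on the five wavy edges of each gadget) or $\tfrac16$ (on every other edge); all of these lie in $(0,\tfrac13]$.

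\emph{Feasibility.} Every terminal pair has the form $(v,r)$ and the $z$-variables are nonnegative, so it suffices to check, for each $S\subseteq V$ with $r\notin S$, that $x(\delta(S))\ge 1-\min_{v\in S}z_v$ (the case $r\in S$ reduces to this by complementation). This splits into two families: $x(\delta(S))\ge 1$ whenever $s\in S$, and $x(\delta(S))\ge \tfrac4k$ whenever $\emptyset\ne S$ and $s,r\notin S$. For the second family, note every gadget node has total $x$-degree exactly $\tfrac4k$ (one wavy edge of value $\tfrac2k$ and two edges of value $\tfrac1k$), so singletons attain $\tfrac4k$; using the internal structure of the gadget in Figure~\ref{fig:PCST_gadget} one checks that no cut avoiding both $s$ and $r$ does better. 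For the first family, argue gadget by gadget: a set $S$ with $s\in S$, $r\notin S$ either contains an entire gadget (then the $r$-edge of that gadget, of value $\tfrac1k$, is cut), is disjoint from it (then its $s$-edge is cut), or splits it (then the gadget's internal edges crossing $S$ have total value at least $\tfrac1k$, by inspection of the gadget); the cycle edges only add. Summing the $\ge\tfrac1k$ contribution of each of the $k$ gadgets gives $x(\delta(S))\ge 1$.

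\emph{Extreme point.} Here one exhibits $|E|+|V\setminus\{r\}|$ tight, linearly independent constraints. The $z$-coordinates are handled by the $10k$ node-singleton cuts $\{u\}$ ($u$ a gadget node), each tight at value $\tfrac4k$ and contributing the equality $x(\delta(\{u\}))+z_u=1$, together with the nonnegativity constraint $z_s=0$; these are visibly independent since each brings in a distinct $z$-variable. It remains to pin down $x$: for this we use the two cuts $\{s\}$ and $\{r\}$, each of $x$-weight exactly $1$ (hence tight for the pair $(s,r)$ since $z_s=0$), together with a family of ``small-tight'' sets that are unions of a few gadget nodes inside a single gadget, which exist and are suitably independent by the design of the gadget. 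Full rank is then equivalent to the statement that the only solution of the corresponding homogeneous system with $z'_s=0$ is $(x',z')=\0$: one first uses the intra-gadget small-tight constraints (after eliminating the $z'_u$ via the singleton relations) to force $x'$ to vanish on every edge inside every gadget, and then uses the cuts $\{s\}$, $\{r\}$ and the cyclic arrangement of the gadgets to force $x'$ to vanish on the edges at $r$, at $s$, and on the cycle.

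\emph{Main obstacle.} The delicate part is the extreme-point argument, and specifically the choice of the ten-node gadget in Figure~\ref{fig:PCST_gadget}: it must be rigid enough that its internal small-tight sets, together with the two ``boundary'' equalities coming from its $r$-port and $s$-port, uniquely determine all of its edge values, while the cyclic chain of $k$ copies leaves no further degree of freedom. Once a gadget with this property is in hand, verifying feasibility and the rank count is a finite, routine check.
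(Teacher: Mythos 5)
Your overall framing—show that the tight constraints at $(x,z)$ determine it uniquely, which is the full-rank condition—is the same approach as the paper. But there are two genuine gaps, and together they leave the substance of the argument unproved.

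First, the key content of the proof is exactly \emph{which} cuts in the gadget are tight and \emph{how} they interact, and your proposal leaves this entirely unspecified (``a family of small-tight sets\ldots which exist and are suitably independent by the design of the gadget''). The paper pins these down explicitly: the singletons $\{u_i\}$, the whole gadget $\{u_1,\ldots,u_{10}\}$ paired in turn with each of the ten $z$-variables (which is what forces all $z_{u_i}$ within a gadget to coincide), the wavy-edge endpoint pairs $\{u_1,u_2\},\{u_3,u_4\},\ldots,\{u_9,u_{10}\}$, the quadruples $\{u_1,\ldots,u_4\}$ and $\{u_7,\ldots,u_{10}\}$, plus $x(\delta(r))+z_s=1$ and $z_s=0$. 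Identifying this family and carrying out the resulting inclusion--exclusion algebra is the heart of the claim; the gadget was designed around it, so ``routine once a gadget with this property is in hand'' defers precisely what must be established.

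Second, and more importantly, your description of how the homogeneous system closes is wrong. You say the intra-gadget constraints ``force $x'$ to vanish on every edge inside every gadget,'' after which $\{s\}$, $\{r\}$ and the cycle kill the remaining edges. In fact the intra-gadget constraints (including the singleton relations used to eliminate the $z'_{u}$'s) only collapse each gadget to a one-parameter family: all straight edges of gadget $i$, including its four port edges, become a common value $c_i$, all wavy edges become $2c_i$, and all its $z'_{u}$'s become $-4c_i$. Nothing inside a gadget forces $c_i=0$. One must then use the cycle edges $v_iv_{i+1}$ (each such edge is a shared port of two gadgets) to force $c_1=\cdots=c_k=:c$, and only then does $x(\delta(r))+z_s=1$ with $z_s=0$ give $kc=0$, hence $c=0$. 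Relatedly, your picture of ``the two boundary equalities coming from its $r$-port and $s$-port'' per gadget is off: the cuts $\{r\}$ and $\{s\}$ each contribute a \emph{single} equation involving all $k$ gadgets' port edges simultaneously, not two equations per gadget; so they cannot pin down a gadget's boundary on their own. This is not merely a presentational issue—the deduction sequence you propose does not terminate, because the step it relies on (``vanish inside each gadget first'') is false.

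The feasibility sketch is directionally fine but similarly appeals to ``by inspection'' at the one place where the gadget's structure actually matters. Overall: same strategy as the paper, but the essential combinatorics is missing and the outlined closure of the rank argument contains a real error.
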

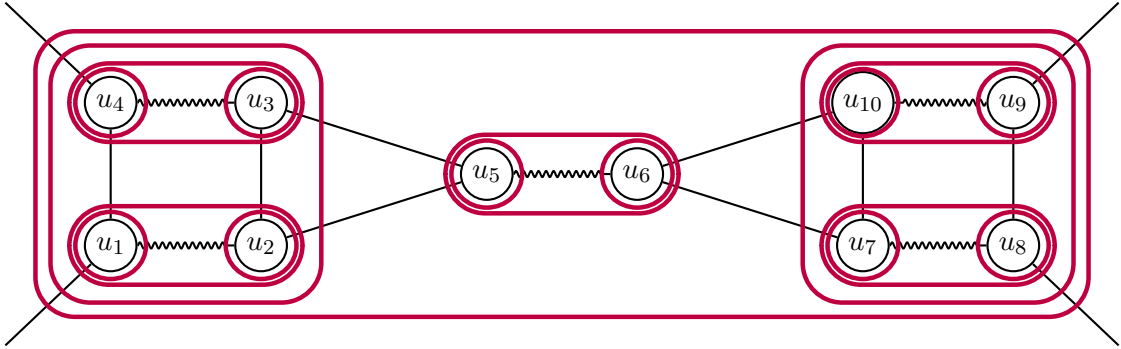
\begin{figure}[ht]
\centering
\begin{tikzpicture}[inner sep = 2.5pt, yscale=0.95]
\def\distExtra{1.4}
\def\distEpsilon{.55}
\def\distEpsilonII{.8}
\def\distEpsilonIII{1.}
\def\smallCutRadius{.47}
\tikzstyle{vtx}=[circle,draw,thick]
\tikzstyle{cut}=[line width=1.7pt, purple]
\tikzstyle{highweight}=[thick, decorate, decoration={snake, amplitude=.4mm, segment length=1mm, post length=1mm}]
\tikzstyle{lowweight}=[thick]
\node[vtx] (u1) at (0,0) {$u_1$};
\node[vtx] (u2) at (2,0) {$u_2$};
\node[vtx] (u3) at (2,2) {$u_3$};
\node[vtx] (u4) at (0,2) {$u_4$};
\node[vtx] (u5) at (5,1) {$u_5$};
\node[vtx] (u6) at (7,1) {$u_6$};
\node[vtx] (u7) at (10,0) {$u_7$};
\node[vtx] (u8) at (12,0) {$u_8$};
\node[vtx] (u9) at (12,2) {$u_9$};
\node[vtx] (u10) at (10,2) {$u_{10}$};
\draw[lowweight] (u2) [decorate]-- (u3);
\draw[highweight] (u1) -- (u2);
\draw[lowweight] (u4)[decorate] -- (u1);
\draw[highweight] (u4) -- (u3);
\draw[lowweight] (u1) [decorate]-- ($(u1)-(\distExtra,\distExtra)$);
\draw[lowweight] (u2) [decorate]-- (u5);
\draw[lowweight] (u3) [decorate]-- (u5);
\draw[lowweight] (u4) [decorate]-- ($(u4)-(\distExtra,-\distExtra)$);
\draw[highweight] (u5) -- (u6);
\draw[lowweight] (u8) [decorate]-- ($(u8)-(-\distExtra,\distExtra)$);   
\draw[lowweight] (u9) [decorate]-- ($(u9)+(\distExtra,\distExtra)$);      
\draw[lowweight] (u7) [decorate]-- (u10);          
\draw[lowweight] (u10) [decorate]-- (u6) [decorate]--(u7);      
\draw[lowweight] (u9) [decorate]-- (u8);    
\draw[highweight] (u9) -- (u10);     
\draw[highweight] (u7) -- (u8);                
\draw[cut] (u1) circle (\smallCutRadius);
\draw[cut] (u2) circle (\smallCutRadius);
\draw[cut] (u3) circle (\smallCutRadius);
\draw[cut] (u4) circle (\smallCutRadius);
\draw[cut] (u5) circle (\smallCutRadius);
\draw[cut] (u6) circle (\smallCutRadius);
\draw[cut] (u7) circle (\smallCutRadius);
\draw[cut] (u8) circle (\smallCutRadius);
\draw[cut] (u9) circle (\smallCutRadius);
\draw[cut] (u10) circle (\smallCutRadius);
\draw[cut, rounded corners=15pt]
  ($(u1)-(\distEpsilon,\distEpsilon)$) rectangle ($(u2)+(\distEpsilon,\distEpsilon)$);
\draw[cut, rounded corners=15pt]
  ($(u4)-(\distEpsilon,\distEpsilon)$) rectangle ($(u3)+(\distEpsilon,\distEpsilon)$);
 \draw[cut, rounded corners=15pt]
 ($(u1)-(\distEpsilonII,\distEpsilonII)$) rectangle ($(u3)+(\distEpsilonII,\distEpsilonII)$);
 \draw[cut, rounded corners=15pt]
  ($(u5)-(\distEpsilon,\distEpsilon)$) rectangle ($(u6)+(\distEpsilon,\distEpsilon)$);
\draw[cut, rounded corners=15pt]
  ($(u7)-(\distEpsilon,\distEpsilon)$) rectangle ($(u8)+(\distEpsilon,\distEpsilon)$);
\draw[cut, rounded corners=15pt]
  ($(u10)-(\distEpsilon,\distEpsilon)$) rectangle ($(u9)+(\distEpsilon,\distEpsilon)$);
 \draw[cut, rounded corners=15pt]
 ($(u7)-(\distEpsilonII,\distEpsilonII)$) rectangle ($(u9)+(\distEpsilonII,\distEpsilonII)$);  
 \draw[cut, rounded corners=15pt]
 ($(u1)-(\distEpsilonIII,\distEpsilonIII)$) rectangle ($(u9)+(\distEpsilonIII,\distEpsilonIII)$);    
  \end{tikzpicture}
\caption{A gadget used for the construction in Figure~\ref{fig:PCST_example}. 
Additionally, the cuts  are marked as tight constraints in (PCST-LP) for the constructed point $(x,z)$. For an edge $e$, $x_e = 2/k$ if $e$ is a wavy edge, and $x_e = 1/k$ if it is a straight edge.}
\label{fig:PCST_gadget}
\end{figure}
\begin{proof}
It is straightforward to check that the defined point $(x,z)$ is feasible. 
Let us show that the defined point $(x,z)$ is a vertex of (PCST-LP). 
To show this, it is enough to provide a set of tight constraints in (PCST-LP) which uniquely define the above point $(x,z)$.

Let us consider the gadget in Figure~\ref{fig:PCST_gadget}. 
For each such gadget, the set of tight inequalities from (PCST-LP) contains the following constraints:
\begin{align}
x(\delta(u_i))+z_{u_i} &= 1 ~~&&\forall i\in \{1,\ldots, 10\}\label{eq:PCST_1}\\
x(\delta(\{u_1,\ldots,u_{10}\}))+z_{u_i} &= 1 ~~&&\forall i\in \{1,\ldots, 10\}\label{eq:PCST_2}\\\
x(\delta(\{u_{i },u_{i+1}\}))+z_{u_i} &= 1 ~~&&\forall i\in \{1,3,5,7,9\}\label{eq:PCST_3}\\
x(\delta(\{u_{1},\ldots,u_{4}\}))+z_{u_1} &= 1\label{eq:PCST_4}\\
x(\delta(\{u_{7},\ldots,u_{10}\}))+z_{u_7} &= 1\label{eq:PCST_5}\,.
\end{align}
There are two more tight constraints which we use in the proof:
\begin{align}
x(\delta(r))+z_s &=1\label{eq:PCST_6}\\
z_s &=0\label{eq:PCST_7}\,.\hspace*{3.4cm}
\end{align}

Let us prove that the constraints~\eqref{eq:PCST_1}--\eqref{eq:PCST_7} define the point $(x,z)$ from Claim \ref{claimpcstextremepoint}. 
First, let us consider a gadget in Figure~\ref{fig:PCST_gadget}. 
It is clear that~\eqref{eq:PCST_2} implies $z_{u_1}=\ldots=z_{u_{10}}$. By~\eqref{eq:PCST_1} and~\eqref{eq:PCST_3}, we get
$$
2 x_{u_1u_2}=x(\delta(u_1))+x(\delta(u_2))-x(\delta(\{u_1,u_2\}))=(1-z_{u_1})+(1-z_{u_1})-(1-z_{u_1})=(1-z_{u_1})\,,
$$
and hence $x_{u_1u_2}=(1-z_{u_1})/2$. Similarly, we obtain $x_{u_1u_2}=x_{u_3u_4}=\ldots=x_{u_9u_{10}}=(1-z_{u_1})/2$. 

Now, we have
\begin{align*}
&x_{u_3 u_5}+x_{u_2 u_3}=x(\delta(u_3))-x_{u_3u_4}=(1-z_{u_1})/2\\
&x_{u_2 u_3}+x_{u_2 u_5}=x(\delta(u_2))-x_{u_1u_2}=(1-z_{u_1})/2\\\
&x_{u_2 u_5}+x_{u_3 u_5}=x(\delta(u_5))-x_{u_5u_6}=(1-z_{u_1})/2\,,
\end{align*}
implying $x_{u_2 u_3}=x_{u_2 u_5}=x_{u_3 u_5}=(1-z_{u_1})/4$. Similarly, $x_{u_6 u_7}=x_{u_6 u_{10}}=x_{u_7 u_{10}}=(1-z_{u_1})/4$.

By~\eqref{eq:PCST_3} and~\eqref{eq:PCST_4}, we get
$$
2 x_{u_1 u_4}=x(\delta(\{u_1,u_2\}))+x(\delta(\{u_3,u_4\}))-x(\delta(\{u_1,\ldots,u_4\}))-2x_{u_2 u_3}=(1-z_{u_1})/2\,,
$$
showing $x_{u_1 u_4}=(1-z_{u_1})/4$. Similarly, we get $x_{u_8 u_9}=(1-z_{u_1})/4$. 
From here, it is straightforward to show that all straight edges in Figure~\ref{fig:PCST_gadget} have value $(1-z_{u_1})/4$ and all wavy edges have value $(1-z_{u_1})/2$.

Consider the graph in Figure~\ref{fig:PCST_example}. 
Due to the edge $v_1v_2$, the straight edges in the gadget associated to $v_1$ have the same $x$ value as the straight edges in the gadget associated to $v_2$. 
Thus, due to the cycle $v_1v_2\ldots v_k$ the straight edges in all gadgets have the same $x$ value. 
To finish the proof use~\eqref{eq:PCST_6} and~\eqref{eq:PCST_7}.
\end{proof}

\section*{Acknowledgement}
We would like to thank Hausdorff Research Institute for Mathematics. 
This research was initiated during the Hausdorff Trimester Program ``Combinatorial Optimization".
NO was partially supported by an NWO Veni grant.
RR was supported in part by the U. S. National Science Foundation under award number
CCF-1527032. 
CS was supported in part by NSERC grant 327620-09 and an NSERC Discovery Accelerator
Supplement Award.  JK and KP were supported in part by NSERC Discovery Grant No 277224.

\bibliography{pcsf}
\bibliographystyle{plain}

\appendix

\section{Implications of an LMP Approximation Algorithm for \PCSF}
We adapt the arguments in~\cite{CV} to show that a $\beta$-LMP approximation relative to
\eqref{lp} implies that any fractional solution $(x,z)$ to \eqref{lp} can be translated to
a distribution over integral solutions to \eqref{lp} satisfying certain properties; this
implies the existence of the forests $F_1,\ldots,F_q$ in Section~\ref{lmp}. The arguments
below are known (see, e.g., the proof of Theorem 7.2 in~\cite{GeorgiouS13}); we include
them for completeness. 

Let $G=(V,E)$, $\{c_e\geq 0\}_{e\in E}$, $\{(s_i,t_i,\pi_i)\}_{i=1}^k$ be a \PCSF-instance.
Let $\{(x^{(q)},z^{(q)})\}_{q\in\I}$ be the set of all integral solutions to \eqref{lp},
where $\I$ is simply an index set.

\begin{theorem} \label{lmp-theorem}
Let $\A$ be a $\beta$-LMP approximation algorithm for \PCSF relative to \eqref{lp}. 
Given any fractional solution $(x^*,z^*)$ to \eqref{lp}, one can obtain 
nonnegative multipliers $\{\lambda^{(q)}\}_{q\in\I}$ such that $\sum_q\lambda^{(q)}=1$,  
$\sum_q\lambda^{(q)}x^{(q)}\leq\beta x^*$, and
$\sum_{q}\lambda^{(q)}z^{(q)}\leq z^*$. Moreover, the $\lambda^{(q)}$ values are rational
if $(x^*,z^*)$ is rational.
\end{theorem}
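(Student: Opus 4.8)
The plan is to follow the Carr--Vempala template, using the given algorithm $\A$ as an (approximate) separation oracle. Consider the feasibility linear program in the variables $\{\lambda^{(q)}\}_{q\in\I}$ given by
\[
  \lambda\ge\0,\qquad \sum_{q\in\I}\lambda^{(q)}=1,\qquad \sum_{q\in\I}\lambda^{(q)}x^{(q)}\le\beta x^*,\qquad \sum_{q\in\I}\lambda^{(q)}z^{(q)}\le z^* .
\]
We may assume $\I$ indexes only the \emph{finitely many} integral solutions of \eqref{lp} with $x^{(q)}\in\{0,1\}^E$ and $z^{(q)}\in\{0,1\}^k$, since replacing an integral solution by a coordinatewise smaller one only relaxes the two domination constraints. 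With this choice the feasible region is a polytope (it lies in the simplex $\{\lambda\ge\0:\sum_q\lambda^{(q)}=1\}$) described by rational data when $(x^*,z^*)$ is rational, so it suffices to prove that this program is feasible: a nonempty rational polytope has a rational vertex, which yields rational multipliers, and running the ellipsoid method on the dual with the separation oracle described below computes such multipliers in polynomial time (this is exactly the argument of \cite{CV}; see also the proof of Theorem~7.2 in \cite{GeorgiouS13}).

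Next I would dualize. By LP duality (Farkas' lemma), the displayed system is infeasible if and only if there are vectors $c\ge\0$ (dual to the edge constraints) and $\pi\ge\0$ (dual to the terminal-pair constraints) with
\[
  \beta\, c^{\tp}x^* + \pi^{\tp}z^* \;<\; \min_{q\in\I}\bigl(c^{\tp}x^{(q)} + \pi^{\tp}z^{(q)}\bigr),
\]
the multiplier of the equality $\sum_q\lambda^{(q)}=1$ being optimally set to $-\min_q\bigl(c^{\tp}x^{(q)}+\pi^{\tp}z^{(q)}\bigr)$. Hence it is enough to show that for \emph{every} pair of nonnegative vectors $c,\pi$ there is an integral solution $(x^{(q)},z^{(q)})$ with $c^{\tp}x^{(q)}+\pi^{\tp}z^{(q)}\le\beta c^{\tp}x^*+\pi^{\tp}z^*$; this rules out any such $(c,\pi)$ and so establishes feasibility.

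The key step is to produce this integral solution from $\A$. Given $c,\pi\ge\0$, run $\A$ on the \PCSF-instance with edge costs $c$ and penalties $\pi_i/\beta$; let $F$ be the forest it returns and let $(x^{(q)},z^{(q)})$ be the corresponding $0/1$ solution ($x^{(q)}$ the indicator of $E(F)$, and $z^{(q)}_i=1$ exactly for the pairs $(s_i,t_i)$ not connected by $F$), which is feasible for \eqref{lp} and hence indexed by some $q\in\I$. The $\beta$-LMP guarantee of $\A$ on this instance states that
\[
  c(F)\;+\;\beta\!\!\!\sum_{i:\,(s_i,t_i)\text{ not connected by }F}\!\!\!\frac{\pi_i}{\beta}\;\le\;\beta\cdot\opt,
\]
where $\opt$ is the optimum of \eqref{lp} with edge costs $c$ and penalties $(\pi_i/\beta)_i$. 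The left-hand side equals $c^{\tp}x^{(q)}+\pi^{\tp}z^{(q)}$, and since $(x^*,z^*)$ is a feasible point of \eqref{lp} (feasibility is independent of the objective) we have $\opt\le c^{\tp}x^*+\tfrac1\beta\pi^{\tp}z^*$. Combining these, $c^{\tp}x^{(q)}+\pi^{\tp}z^{(q)}\le\beta\bigl(c^{\tp}x^*+\tfrac1\beta\pi^{\tp}z^*\bigr)=\beta c^{\tp}x^*+\pi^{\tp}z^*$, exactly as needed, contradicting the infeasibility condition. Therefore the $\lambda$-program is feasible and the theorem follows.

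The step I expect to require the most care is matching the rescaling to the asymmetry in the conclusion: the conclusion inflates $x^*$ by $\beta$ but leaves $z^*$ untouched, and this is precisely compensated by feeding $\A$ the penalties $\pi/\beta$ rather than $\pi$, so that the factor $\beta$ in front of the penalty term of the LMP inequality cancels. One must also take care to invoke the LMP guarantee against the optimum of \eqref{lp} for the \emph{modified} objective (not against the integer optimum) and only then bound it by the value of the feasible solution $(x^*,z^*)$. The remaining ingredients --- reducing $\I$ to the finite set of $0/1$ solutions, the Farkas dualization, and extracting a rational (and, via the ellipsoid method with $\A$ as oracle, polynomial-time-computable) $\lambda$ from feasibility of a rational LP --- are routine.
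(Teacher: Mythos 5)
Your proof is correct and follows essentially the same approach as the paper: both set up the decomposition problem as a linear program over $\lambda$, dualize (you via Farkas on the feasibility system, the paper via an explicit primal-dual pair with optimum $1$), and rule out a separating $(c,\pi)$ by running $\A$ on the instance with edge costs $c$ and penalties $\pi/\beta$ and bounding $\opt_{\eqref{lp}}$ via the feasible point $(x^*,z^*)$. The rescaling of penalties by $1/\beta$, the contradiction, and the rationality/ellipsoid remarks all match the paper's argument.
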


\begin{proof}
Consider the following pair of primal and dual LPs.

\vspace{-3ex}

{\centering\small
\noindent \hspace*{-6ex}
\begin{minipage}[t]{.475\textwidth}
\begin{alignat*}{2}
\max & \quad & \sum_q\lambda^{(q)} & \tag{P} \label{primal} \\
\text{s.t.} && \sum_q\lambda^{(q)} x_e^{(q)} & \leq \beta x^*_e \qquad \forall e \\ 
&& \sum_{q}\lambda^{(q)}z_{i}^{(q)} & \leq z^*_i  \qquad \forall i\\ 
&& \sum_q\lambda^{(q)} & \leq 1 \\ 
&& \lambda & \geq 0. \notag
\end{alignat*}
\end{minipage}
\quad \rule[-31ex]{1pt}{28ex}\ 
\begin{minipage}[t]{.52\textwidth}
\begin{alignat*}{2}
\min & \quad & \sum_e \beta x^*_ed_e+\sum_iz^*_i\rho_i & + \gamma \tag{D} \label{dual} \\ 
\text{s.t.} && \sum_e x_e^{(q)}d_e+\sum_{i}z_{i}^{(q)}\rho_i + \gamma & \geq 1 \qquad \forall q 
\\ 
&& d,\rho,\gamma & \geq 0. \notag
\end{alignat*}
\end{minipage}
}

\smallskip
It suffices to show that the optimal value of \eqref{primal} is 1. The rationality of the
$\lambda^{(q)}$ values when $(x^*,z^*)$ is rational then follows from the fact that an LP
with rational data has a rational optimal solution. (The proof below also yields a
polynomial-time algorithm to solve \eqref{primal} by showing that $\A$ can be used to
obtain a separation oracle for the dual.) 

Note that both \eqref{primal} and \eqref{dual} are feasible, so they have a common
optimal value. We show that $\opt_{\ref{dual}}=1$. Setting $\gamma=1$, $d=\rho=\0$,
we have that $\opt_{\ref{dual}}\leq 1$. Suppose $(d,\rho,\gamma)$ is feasible to
\eqref{dual} and $\sum_e \beta x^*_ed_e+\sum_iz^*_i\rho_i+\gamma<1$. Consider the \PCSF
instance given by $G$, edge costs $\{d_e\}_{e\in E}$, and terminal pairs and
penalties $\{(s_i,t_i,\rho_i/\beta)\}_{i=1}^k$. Running $\A$ on this instance, we can
obtain an integral solution $(x^{(q)},z^{(q)})$ such that
$$
\sum_ed_e x^{(q)}_e+\sum_i\rho_iz^{(q)}_i + \gamma \leq
\beta\Bigl(\sum_ed_e x^*_e+\sum_iz^*_i\rho_i/\beta \Bigr) + \gamma <1
$$
which contradicts the feasibility of $(d,\rho,\gamma)$. Hence, $\opt_{\ref{dual}}=1$.
\end{proof}

Note that if $(x^*,z^*)$ is rational, then since the $\lambda^{(q)}$ values are rational,
we can multiply them by a suitably large number to convert them to integers; thus, we may
view the distribution specified by the $\lambda^{(q)}$ values as the 
{\em uniform distribution} over a {\em multiset} of integral solutions to \eqref{lp}.

We remark that the converse of Theorem~\ref{lmp-theorem} also holds in the following
sense. If for every fractional solution $(x^*,z^*)$ to \eqref{lp}, we can obtain
$\lambda^{(q)}$ values (or equivalently, a distribution over integral solutions to
\eqref{lp}) satisfying the properties in Theorem~\ref{lmp-theorem}, then we can obtain a
$\beta$-LMP approximation algorithm for \PCSF relative to \eqref{lp}: this follows, by
simply returning the integral solution $(x^{(q)},z^{(q)})$ with $\lambda^{(q)}>0$ that
minimizes $\sum_ec_ex^{(q)}_e+\beta\sum_i\pi_iz^{(q)}_i$.

\end{document}